\newcolumntype{P}[1]{>{\centering\arraybackslash}p{#1}}
\newtheorem{theorem}{Theorem}
\theoremstyle{definition}
\newtheorem{definition}[theorem]{Definition}
\DeclareMathOperator{\Tr}{Tr}
\DeclareMathOperator*{\argmax}{\arg\!\max}
\DeclareMathOperator{\EX}{\mathbb{E}}
\newcommand{\tr}[1]{\mathrm{Tr}\left(#1\right)} 
\def\<{\langle}
\def\>{\rangle}
\begin{document}



\title{Quantum contextual bandits and \\ recommender systems for quantum data}
\author{Shrigyan Brahmachari$^1$, Josep Lumbreras$^1$, Marco Tomamichel$^{1,2}$}
\affiliation{$^1$Centre for Quantum  Technologies,  National University of Singapore, Singapore}
\affiliation{$^2$Department of Electrical and Computer Engineering, Faculty of Engineering, National University of Singapore, Singapore}

\date{\today}

\begin{abstract}
We study a recommender system for quantum data using the linear contextual bandit framework. In each round, a learner receives an observable (the context) and has to recommend from a finite set of unknown quantum states (the actions) which one to measure. The learner has the goal of maximizing the reward in each round, that is the outcome of the measurement on the unknown state. Using this model we formulate the low energy quantum state recommendation problem where the context is a Hamiltonian and the goal is to recommend the state with the lowest energy. For this task, we study two families of contexts: the Ising model and a generalized cluster model. We observe that if we interpret the actions as different phases of the models then the recommendation is done by classifying the correct phase of the given Hamiltonian and the strategy can be interpreted as an online quantum phase classifier.
\end{abstract}

\maketitle


\section{Introduction}
Recommender systems are a class of online reinforcement learning algorithms that interact sequentially with an environment suggesting relevant items to a user. During the last decade, there has been an increasing interest in online recommendation techniques due to the importance of advertisement recommendation for e-commerce websites or the rise of movies and music streaming platforms~\cite{gomezNetflix,dragone2019deriving}. Among different settings for recommender systems, in this work, we focus on the contextual bandit framework applied to the recommendation of quantum data. The contextual bandit problem is a variant of the multi-armed bandit problem where a learner at each round receives a context and given a set of actions (also called actions) has to decide the best action using the context information. After selecting an action the learner will receive a reward and for the next rounds, they will use the previous information of contexts and rewards in order to make their future choices. As in the classical multi-armed bandit problem, the learner has to find a balance between exploration and exploitation; exploration refers to trying different actions in order to eventually learn the ones with the highest reward and exploitation refers to selecting the actions that apparently will give the highest reward immediately. For a comprehensive review of bandit algorithms, we refer to the book by Lattimore and Szepesv\'ari~\cite{lattimore2020bandit}. Some real-life applications~\cite{bouneffouf2020survey} of bandit include clinical trials~\cite{durand2018contextual}, dynamic pricing~\cite{dynamical}, advertisement recommendation~\cite{advertisement} or online recommender systems~\cite{contextual_new_article_recommendation,mcinerney2018explore}. As an example, in~\cite{contextual_new_article_recommendation} a news article recommender system was considered where the context is the user features, the actions are the articles to recommend and the reward is modeled as a binary outcome indicating that the user clicks or not on the recommended article. 

Quantum algorithms for the classical multi-armed bandit problem have been studied for the settings of best-arm identification~\cite{quantumbandits,quantumbandits2}, exploration-exploitation with stochastic environments~\cite{wan2022quantum} (uncorrelated and linear correlated actions) and adversarial environments~\cite{cho2022quantum}. Also, a quantum neural network approach was considered in~\cite{hu2019training} for a simple best-arm identification problem. A quantum algorithm for a classical recommender system was considered in~\cite{kerenidis2017quantum} claiming an exponential speedup over known classical algorithms but later in~\cite{tang2019quantum} it was proven that the price of the speedup comes from the assumptions of the quantum state preparation part and argued that under related classical assumptions a classical algorithm can also achieve the speedup. There are other more general reinforcement learning frameworks beyond bandits where actions affect the rewards in the long term such as Markov decision process. The quantum generalization of this framework has been considered in~\cite{barry11,ying2021}, and although our model of study falls into their class we can derive more concrete results since we study a specific setting.

We are interested in studying a recommender system for quantum data that is modeled by a set of unknown quantum processes- which is called the \textit{environment}, and a set of tasks to perform using these quantum processes- which is called a \textit{context set}. A learner interacts sequentially with the environment receiving at each round a task from the context set and then choosing the best quantum process to perform this task. For example, we could model the environments as a set of noisy quantum computers, the context set as a set of different quantum algorithms, and then at each round, the learner is given a quantum algorithm to run and their goal is to recommend the best quantum computer to do this task. We note that this model exemplifies the bandit exploration-exploitation trade-off since the learner has to try (explore) the different quantum computers in order to decide the best one but at the same time has to choose the best one (exploitation) to perform the task. This trade-off is interesting in a practical scenario because it captures settings where online decisions are important and or they have some associated cost that makes the learner always try to perform optimally. In our example, one could think that using a quantum computer costs money for the learner, so at each stage, they always want to select the ones that will output the best solutions.
\begin{figure}[H]
    \centering
    \includegraphics[scale = 0.5]{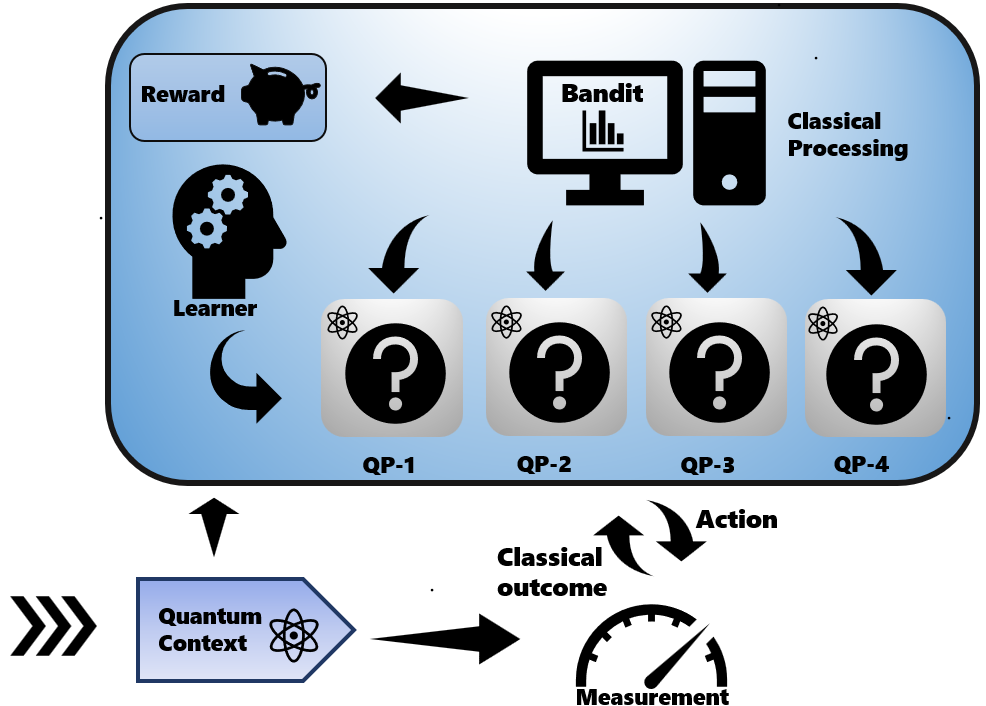}
    \caption{Sketch of a recommender system for quantum data. The learner receives sequentially quantum contexts and feed them to the classical processing system. The context is also fed to the measurement system. The classical processing system uses the information about the context to pick one of the quantum processes (no information regarding these processes are known besides from measurements). The chosen quantum process is applied to the measurement system, and the measurement outcome is fed to the classical processing and is added to the cumulative reward.}
    \label{fig:QCB}
\end{figure}

In our work, we extend the setting considered in~\cite{lumbreras21bandit} where they studied the exploration-exploration trade-off of learning properties of quantum states. In our model, the environment is a set of unknown quantum states, the context set is a (finite or infinite) set  of observables and at each round the learner receives an observable and has to perform a measurement on one of the unknown quantum states (the recommendation) aiming to maximize its outcome. We define this problem as the \textit{quantum contextual bandit} (QCB) and we note that it falls into the class of linear contextual bandits~\cite{abe2003reinforcement,aurer,chu}. The QCB is the basic framework where we formulate our recommender system for quantum data. We use as a figure of merit the regret, which is the cumulative sum of the difference between the expected outcome of the best and selected action at each round. Finding a strategy that minimizes regret implies finding the mentioned balance between exploration-exploitation of the different actions. As a concrete recommendation task captured by the QCB model, we consider the \textit{low energy quantum state recommendation problem}. In this problem, at each round, the learner receives a quantum Hamiltonian and has to recommend from the environment the state with the lowest energy. The ground state preparation problem is an important ingredient of NISQ algorithms~\cite{kishor_noisy2022} and our model could be useful in order to implement an online recommendation algorithm that helps the learner choose the best ansatz for their energy minimization task when they have multiple problems to solve. One of the advantages of using bandit algorithms for this task is that they do not need to reconstruct the whole $d$-dimensional state, just the relevant part for the recommendation which depends on the structure of the context set. In order to do that we combine a Grahm-Schmidt procedure with classical linear bandit strategies. This allows our algorithm to store low-dimensional approximations of the unknown quantum states without prior knowledge of the context set. We also perform some numerical studies of the scaling of the regret for the cases where the context set is an Ising model and a generalized cluster model studied in~\cite{verresen2017one}. For these models, we propose unknown actions for the algorithm corresponding to ground states located at different phases, and then for each context received by the algorithm we associate each action to a different phase and we reproduce a ground state phase diagram. We observe that the recommendation of the algorithm is done  approximately by classifying the different phases of the studied models and we are able to clearly distinguish them in the phase diagram.

The rest of the paper is organized as follows: in Section \ref{sec:model}, we establish the mathematical model for the quantum contextual bandit, and then define the notation used throughout the paper; in the next section, Section \ref{sec:lowerbound} we prove the lower bound on a performance metric (expected regret, which we define in Section \ref{sec:model}) over all possible algorithms.
In Section \ref{sec:algorithm} we review the linear Upper Confidence Bound algorithm. In Section \ref{sec:lowenergy} we describe the low-energy recommendation system and adapt the \textsf{LinUCB} algorithm to this setting. We illustrate the efficiency of the algorithm through simulations of different context sets.

\section{The model}\label{sec:model}
First, we introduce some notation in order to define our model and present our results. We define $[T] = \lbrace 1,...,T \rbrace$ for $T\in\mathbb{N}$. Let $\mathcal{S}_d = \lbrace \rho \in \mathbb{\mathbb{C}}^{d \times d}: \rho \geq 0 \wedge \Tr ( \rho ) = 1 \rbrace$ denote the set of positive semi-definite operators with unit trace, i.e \emph{quantum states} that act on a $d$-dimensional Hilbert space $\mathbb{C}^d$. Moreover, \emph{observables} are Hermitian operators acting on $\mathbb{C}^d$, collected in the set $\mathcal{O}_d = \lbrace O\in \mathbb{C}^{d\times d} : O^{\dagger} = O \rbrace $. We denote real $d$-dimensional column vectors as $\mathbf{v}$ and the inner product of two of them $\mathbf{u},\mathbf{v}\in\mathbb{R}^d$ as $\mathbf{u}^\top \mathbf{v}$ where $\mathbf{u}^\top$ denotes the transpose of $\mathbf{u}$ that is a row vector. We use $\| \cdot \|_2$ in order to denote the 2-norm of a real vector. For a $n$-qubit system with Hilbert space dimension $d=2^n$, we denote $X_i,Y_i$, and $Z_i$ the $x,y,z$ Pauli operators acting on the $i$-th qubit ($1\leq i \leq n$). A Pauli observable can be expressed as the $n$-fold tensor product of the $2 \times 2$ Pauli matrices, i.e it is an element of the set $\left\lbrace I,X,Y,Z \right\rbrace^{\otimes n}/ I_{4^n \times 4^n}$. Note that there are $4^n - 1$ such observables and each of them are orthogonal, and therefore form a basis, which alludes to as the \textit{Pauli basis} henceforth.

The definition of our model takes some of the conventions used for the multi-armed quantum bandit (MAQB) problem~\cite{lumbreras21bandit}.

\begin{definition}[Quantum contextual bandit] Let $d\in\mathbb{N}$. A $d$-dimensional \textit{quantum contextual bandit} is given by a set of observables $\mathcal{C} = \lbrace O_c \rbrace_{c\in\Omega_{\mathcal{C}}}\subseteq \mathcal{O}_d$ that we call the \textit{context set}, $(\Omega_{\mathcal{C}},\Sigma_{\mathcal{C}})$ is a measurable space and $\Sigma_{\mathcal{C}}$ is a $\sigma$-algebra of subsets of $\Omega_{\mathcal{C}}$. The bandit is in an \textit{environment}, a finite set of quantum states  $\gamma = \lbrace \rho_1,\rho_2,\cdots , \rho_k \rbrace \subset \mathcal{S}_d $, that it is unknown. The quantum contextual bandit problem is characterized by the tuple $(\mathcal{C},\gamma)$.
\end{definition}

 Given the environment $\gamma$ such that $ |\gamma | = k$ we define the \textit{action set} $\mathcal{A} = \lbrace 1,...,k \rbrace$ as the set of indices that label the quantum states $\rho_i\in\gamma$ in the environment. For every observable $O_c \in\mathcal{C}$ the spectral decomposition is given by 
\begin{align}
O_c = \sum_{i=1}^{d_c} \lambda_{c,i}\Pi_{c,i}, 
\end{align}
where $\lambda_{c,i} \in \mathbb{R}$ denote the $d_c \leq d$ distinct eigenvalues of $O_c$ and $\Pi_{c,i}$ are the orthogonal projectors on the respective eigenspaces.
For each action $a\in\mathcal{A}$ we define the reward distribution with outcome $R \in \mathbb{R}$ as the conditional probability distribution associated of performing a measurement using $O_c$ on $\rho_a$ given by Born’s rule
\begin{align}\label{eq:reward_distribution}
Pr\left[ R = r | A = a , O= O_c \right] = P_{\rho_a}(r|a,c) = \begin{cases}
\Tr ( \rho_a \Pi_{c,i} ) \text{ if } r= \lambda_{c,i}, \\
0 \text{ else. }
\end{cases}
\end{align}

 With the above definitions, we can explain the learning process. The learner interacts sequentially with the QCB over $T$ rounds such that for every round $t\in [T]$:
 
 \begin{enumerate}
 \item The learner receives a context $O_{c_t} \in \mathcal{C}$ from some (possibly unknown) probability measure \newline $P_\mathcal{C}:\Sigma \rightarrow [0,1]$ over the set $\Omega_\mathcal{\mathcal{C}}$. 
 \item Using the previous information of received contexts, actions played, and observed rewards the learner chooses an action $A_t\in\mathcal{A}$.
 \item The learner uses the context $O_{c_t}$ and performs a measurement on the unknown quantum state $\rho_{A_t}$ and receives a reward $R_t$ sampled according to the probability distribution~\eqref{eq:reward_distribution}. 
 \end{enumerate}
 
 We use the index $c_t \in [m]$ to denote the observable $O_{c_t}$ received at round $t\in[T]$. The strategy of the learner is given by a set of (conditional) probability distributions $\pi = \lbrace \pi_t \rbrace_{t\in\mathbb{N}}$ (policy) on the action index set $[k]$ of the form
\begin{align}
\pi_t (a_t|a_1,r_1,c_1,...,a_{t-1},r_{t-1},c_{t-1}, c_t ),
\end{align}
defined for all valid combinations of actions, rewards, and contexts $(a_1,r_1,c_1,...,a_{t-1},r_{t-1},c_{t-1} )$ up to time $t-1$.
Then, if we run the policy $\pi$ on the environment $\gamma$ over $T\in\mathbb{N}$ rounds, we can define a joint probability distribution over
the set of actions, rewards, and contexts as
\begin{align}\label{eq:reward_prob}
P_{\gamma,\mathcal{C},\pi} (a_1,X_1,C_1,...,a_T,X_T,C_T)  =  \int_{C_T} \int_{X_T}\cdots\int_{C_1} \int_{X_1}  \prod_{t=1}^T \pi_t (a_t|a_1,r_1,c_1,...,a_{t-1},r_{t-1},c_{t-1} ) \times \nonumber \\
\times P_{\mathcal{C}} (dc_1) P_{\rho_{a_1}} (d r_1 |a_1,c_1 )\cdots P_{\mathcal{C}} (dc_T) P_{\rho_{a_T}} (d r_T |a_T,c_T ). 
\end{align}
Thus, the conditioned expected value of reward $R_t$ is given by
\begin{align}
\EX_{\gamma,\mathcal{C},\pi} [ R_{t} | A_t = a , O_{c_t} = O_c ] = \Tr ( \rho_a O_c ),
\end{align}
where $\EX_{\gamma,\mathcal{C},\pi} $ denotes the expectation value over the probability distribution~\eqref{eq:reward_prob}. The goal of the learner is to maximize its expected cumulative reward $\sum_{t=1}^T \EX_{\gamma,\mathcal{C},\pi} \left[ R_t \right]$ or equivalently minimizing the \textit{cumulative expected regret}
\begin{align}\label{eq:regretqcb}
\text{Regret}_T^{\mathcal{\gamma,C},\pi } = \sum_{t=1}^T \EX_{\gamma,\mathcal{C},\pi}\left[ \max_{\rho_i\in\gamma}\Tr(\rho_iO_{c_t}) - R_t \right].
\end{align}
For a given action $a\in \mathcal{A}$ and context $O_c \in \mathcal{C}$ the \textit{sub-optimality gap} is defined as
\begin{align}\label{eq:suboptimalitygap}
     \Delta_{a,O_c}=\max_{i\in \mathcal{A}}\Tr(\rho_{i}O_c) - \Tr(\rho_{a}O_c).
\end{align}
Note that the learner could try to learn the distribution of contexts $P_\mathcal{C}$, however, this will not make a difference in minimizing the regret. The strategy of the learner has to be able to learn the relevant part of the unknown states $\lbrace \rho_a \rbrace_{a=1}^{k}$ that depend on the context set and at the same time balance the tradeoff between exploration and exploitation.
We note that it is straightforward to generalize the above setting to continuous sets of contexts $\mathcal{C}$. In order to do that we need a well-defined probability distribution $P_\mathcal{C} (O) dO$ over the context set $\mathcal{C}$.

\section{Lower bound}\label{sec:lowerbound}

In this section, we derive a lower bound for the cumulative expected regret by finding a QCB that is hard to learn for any strategy. Our regret lower bound proof for the QCB model relies on a reduction to a classical multi-armed stochastic bandit given in Theorem 5.1 in~\cite{nonstochasticbandit}. Now we briefly review the multi-armed stochastic bandit problem. 

The \textit{multi-armed stochastic bandit} problem is defined by a discrete set of probability distributions $\nu = (P_a : a\in [k] )$ that is called the environment and $\mu_i$ is the mean of the probability distribution $P_i$ for $i\in [k]$. The learner interacts sequentially with the bandit selecting at each round $t\in[T]$ an action $a\in[k]$ and sampling a reward $R_t$ distributed accordingly to $P_a$. The expected cumulative regret is defined as
\begin{align}
    \text{Regret}_T^{\nu,\pi } = \sum_{t=1}^T \max_{a\in[k]} \mu_a - \EX_{\nu,\pi} [R_t] ,
\end{align}
where $\pi$ and $\EX_{\nu,\pi}$ are both defined analogously from the definitions of the previous section accordingly to this model. It is important to remark that in this setting the actions are independent, meaning that when the learner samples from one action then it cannot use this information to learn about other actions.

Using the above model we describe the multi-armed stochastic bandit studied in Theorem 5.1 in~\cite{nonstochasticbandit}  The bandit is constructed defining an environment $\nu = (P_a : a\in [k] )$ for $k\geq 2$ such that $P_a$ are Bernoulli distributions for all $a\in[k]$ with outcomes $\lbrace l_1,l_2 \rbrace$. Then we set the distributions as follows: we choose an index $i\in[k]$ uniformly at random and assign $P_i (R=l_1) = \frac{1+\Delta}{2}$ for some $\Delta >0$ and $P_a (R = l_1 ) = \frac{1}{2}$ for $a\neq i$.  Thus, there is a unique best action corresponding to $a=i$. Then choosing $\Delta = \small \epsilon\smash{\sqrt{\frac{k}{n}}} \normalsize$ for some small positive constant $\epsilon$, for $n\geq k$ the expected regret for any strategy will scale as 
\begin{align}\label{eq:indep_reg}
\text{Regret}_T^{\nu,\pi } = \Omega (\sqrt{kT} ).
\end{align}

\begin{theorem}
Consider a quantum contextual bandit with underlying dimension $d = 2^n$ and $n\in\mathbb{N}$, context size $c\geq 1$ and $k\geq 2$ actions. Then, for any strategy $\pi$, there exists a context set $\mathcal{C}$, $|\mathcal{C}|=c$, a probability distribution over the context set $\mathcal{C}$ $P_\mathcal{C}$ and an environment $\gamma \in \mathcal{S}_d$ such that for the QCB defined by $(\mathcal{C},\gamma )$ the expected cumulative regret will scale as
\begin{align}\label{eq:qcblowerbound}
\textup{Regret}_T^{\gamma, \mathcal{C},\pi } = \Omega \left(\sqrt{kT}\cdot\min \left\lbrace d,\sqrt{c} \right\rbrace\right),
\end{align}
for $T\geq  k\min \lbrace c,d^2 \rbrace$.
\end{theorem}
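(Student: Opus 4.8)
The plan is to produce a hard instance by reducing to $m:=\min\{c,d^2-1\}$ parallel, essentially independent copies of the hard classical $k$-armed stochastic bandit of Theorem 5.1 in~\cite{nonstochasticbandit}, attaching one copy to each of $m$ mutually Hilbert--Schmidt-orthogonal contexts. Note that $\sqrt m=\Theta(\min\{d,\sqrt c\})$, so the target is $\Omega(\sqrt{kTm})$, and the hypothesis $T\ge k\min\{c,d^2\}\ge km$ is exactly the statement that each sub-bandit is played $T/m\ge k$ times in expectation, i.e.\ the regime in which~\eqref{eq:indep_reg} is available. Since a uniform context distribution maximizes the total regret (by Cauchy--Schwarz across the contexts) and knowledge of $P_\mathcal{C}$ cannot help the learner, I take $P_\mathcal{C}$ uniform on the $m$ active contexts.

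For the construction I fix $m$ nonidentity Pauli observables $O_1,\dots,O_m$ (padding $\mathcal{C}$ with $c-m$ trivial, identity-proportional observables, which incur no regret, so that $|\mathcal{C}|=c$), draw a \emph{balanced} planted assignment $i\colon[m]\to[k]$ — each action being the planted optimum for exactly $m/k$ contexts, assuming $k\mid m$ for simplicity — and set $\rho_a=\frac1d\bigl(I+\Delta\sum_{j:\,i_j=a}O_j\bigr)$, with gap $\Delta$ chosen as in Theorem 5.1 for horizon $T/m$, i.e.\ $\Delta\asymp\sqrt{km/T}$ (subject to the positivity constraint discussed below on how large $\Delta$ may be). Orthogonality of the $O_j$ gives $\Tr(\rho_aO_j)=\Delta\,\charf{i_j=a}$, so in the rounds where context $O_j$ arrives the reward is exactly a $k$-armed Bernoulli bandit with unique optimal arm $i_j$ and sub-optimality gap $\Delta$; crucially this law depends only on $i_j$ and the played action, so playing one sub-bandit reveals nothing about another.

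The reduction then runs as follows. Fix any policy $\pi$. Conditioning on the context sequence $c_1,\dots,c_T$ (equivalently, on the visit counts $N_j$, which by a Chernoff bound are all $\Theta(T/m)$ with high probability once $T/m\gg\log m$), the restriction of $\pi$ to the rounds carrying context $O_j$ is a legitimate policy for the $j$-th classical bandit with horizon $N_j$, so Theorem 5.1 bounds the regret accumulated on those rounds by $\Omega(\sqrt{kN_j})=\Omega(\sqrt{kT/m})$. The QCB regret~\eqref{eq:regretqcb} is a sum over rounds, hence over contexts, so $\text{Regret}_T^{\gamma,\mathcal{C},\pi}\ge\sum_{j=1}^m\Omega(\sqrt{kT/m})=\Omega(\sqrt{kTm})=\Omega(\sqrt{kT}\,\min\{d,\sqrt c\})$; averaging this bound over the random planted assignment $i$ then yields a single deterministic environment $\gamma$ achieving it.

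The step I expect to be the crux is checking that the $\rho_a$ are simultaneously valid quantum states for the gap $\Delta$ forced by Theorem 5.1: $\rho_a\ge0$ is equivalent to $\Delta\,\bigl\|\sum_{j:\,i_j=a}O_j\bigr\|_\infty\le1$, so one must control the operator norm of a signed sum of up to $m/k$ of the chosen observables. This is why the $O_j$ should be taken with structure (e.g.\ a family of pairwise anticommuting Paulis, for which $\|\sum_{j\in S}O_j\|_\infty=\sqrt{|S|}$ exactly; or a generic Pauli family controlled by a matrix-concentration/union-bound estimate over balanced subsets), why a balanced planted assignment is convenient (it caps $|S_a|=m/k$ for every realization, so the averaging step only ever meets genuine states), and why the normalization of the contexts is part of the statement. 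Reconciling this positivity bound with the prescribed gap across the full claimed range $T\ge km$, and confirming that the mild dependence the balanced assignment introduces between the sub-bandits does not weaken the per-context invocation of Theorem 5.1, are the delicate points; the remaining ingredients (visit-count concentration, additivity of the regret over contexts, and the averaging/existence step) are routine bookkeeping.
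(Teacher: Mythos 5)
Your construction is essentially the paper's own proof: the same $\min\{c,d^2-1\}$ mutually orthogonal Pauli contexts, the same perturbed maximally mixed states $\rho_a=\frac{1}{d}\bigl(I+\Delta\sum_{j:\,i_j=a}\sigma_j\bigr)$, and the same reduction of the rounds carrying each context to an independent instance of the classical $k$-armed lower bound, summed over contexts; the only cosmetic differences are that the paper schedules the contexts in deterministic contiguous blocks rather than i.i.d.\ uniformly, and assigns contexts to actions uniformly at random rather than in a balanced way. The positivity constraint you single out as the crux is a genuine issue, but the paper does not address it either --- it takes $\Delta$ to be ``some positive constant'' and never verifies $\rho_a\ge 0$ against the operator norm of $\sum_{j}\sigma_{a,j}$ --- so on that point your writeup is, if anything, more careful than the published argument.
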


\begin{proof}
We use a similar technique to~\cite{abe2003reinforcement,chu} in order to analyze the regret by dividing the problem into subsets of independent rounds. We start dividing the $T$ rounds in $c' = \min \lbrace c,d^2-1 \rbrace$ groups of $T' = \lfloor \frac{T}{c'} \rfloor$ elements. We say that time step $t$ belongs to group $s$ if $\lfloor \frac{t}{T'} \rfloor = s$. We construct a context set $\mathcal{C}$ by picking a set of $c'$ distinct Pauli observables (which is possible since the maximum number of independent Pauli observables is $d^2-1 \geq c'$), so $ \mathcal{C} = \lbrace \sigma_i \rbrace_{i=1}^{c'}$. Recall that a Pauli observable is a $n$-fold tensor product of the 2 × 2 Pauli matrices, thus the reward will be a binary outcome $r_t\in\lbrace -1,1\rbrace$. Then, the context distribution works as follows: at each group $s$ of rounds the learner will receive a different context $\sigma_s\in\mathcal{C}$, so at group $s$ the learner only receives $\sigma_s$. 

We want to build an environment such that for each group of rounds $s\in[m]$ all probability distributions are uniform except one that is slightly perturbed. We associate each Pauli observable $\sigma_i$ to one  unique action $a\in [k]$, and we do this association uniformly at random (each action can be associated with more than 1 Pauli observable). Then each action $a\in[k]$ will have $\lbrace{ \sigma_{a,1},...,\sigma_{a,n_a} \rbrace}$ associated Paulis observables and we can construct the following environment $\gamma = \lbrace \rho_a \rbrace_{a=1}^{k}$ where
\begin{equation}
\rho_a = \frac{I}{d} + \sum_{j=1}^{n_a} \frac{\Delta}{d}\sigma_{a,j},
\end{equation}
$n_a \in \left\lbrace 0,1,...,d^2 -1 \right\rbrace$, $\sum_{a=1}^k n_a = c'$ and $\Delta$ is some positive constant. For every group $s\in[m]$ the learner will receive a fixed context $\sigma_s \in \mathcal{C}$ and there will be a unique action $a'$ with $P_{\rho_a'}(1 | A_t =a',s) =\frac{1}{2}+\frac{\Delta}{2}$ (probability of obtaining $+1$) and the rest $a\neq a'$ will have $P_{\rho_a}(1 | a,s) = \frac{1}{2}$ (uniform distributions). Thus, using that the contexts are independent ($\Tr (\sigma_i\sigma_{j})$ for $i\neq j$) we can apply~\eqref{eq:indep_reg} independently to every group $s$ and we obtain a regret lower bound $ \Omega ( \sqrt{T'k} ) =\Omega ( \smash{\sqrt{\frac{Tk}{c'}}} ) $. Note that in order to apply~\eqref{eq:indep_reg} we need $T' \geq k$ or equivalently $T\geq c'k$. Thus, summing all the $m$ groups we obtain the total regret scales as,
\begin{align}
\text{Regret}_T^{\gamma, \mathcal{C},\pi }  = \Omega \left( c' \sqrt{\frac{Tk}{c'}} \right) = \Omega \left( \sqrt{kT}\cdot\min \left\lbrace d,\sqrt{c} \right\rbrace \right).
\end{align}
\end{proof}

\section{Algorithm}\label{sec:algorithm}
In this section, we review the linear model of multi-armed stochastic bandits and one of the main classical strategies that can be used to minimize regret in this model and also in the QCB model.
\subsection{Linear disjoint single context bandits and QCB}\label{subsec: linear disjoint}
The classical setting that matches our problem is commonly referred to as linear contextual bandits~\cite{chu} although it has received other names depending on the specific setting such as~linear disjoint model~\cite{contextual_new_article_recommendation} or associative bandits~\cite{aurer}. The setting that we are interested uses discrete action sets and optimal algorithms are based on upper confidence bounds (\textsf{UCB}). While these algorithms use the "principle of optimism in the face of uncertainty" there are other approaches like a Thompson sampling~\cite{thompsoncontextual} algorithm but they are not optimal for discrete action sets. We use the contextual linear disjoint bandit model from~\cite{contextual_new_article_recommendation} where each action $a\in [k]$ has an associated unknown parameter $\theta_a\in\mathbb{R}^d$ and at each round $t$ the learner receives a context vector $\mathbf{c}_{t,a}\in \mathbb{R}^d$ for each actions. Then after selecting an action $a\in [k]$ the sampled reward is
\begin{align}\label{eq:contextualproblem}
 R_t = \boldsymbol{\theta}^\top_a \mathbf{c}_{t,a} + \eta_t,
\end{align} 
where $\eta_t$ is some bounded subgaussian~\footnote{A random variable $X$ is called $\sigma$-subgaussian if for al $\mu \in \mathbb{R}$ we have $\EX \left[ \mu X \right] \leq \exp \left(\sigma^2 \mu^2 / 2 \right)$} noise such that $\EX [R_{t} |A_t = a] =  \boldsymbol{\theta}_a \cdot \bold{c}_{t,a}$.

In order to map the above setting to the $d$-dimensional QCB model $(\gamma,\mathcal{C})$ it suffices to consider a vector parametrization (similarly done for the MAQB~\cite{lumbreras21bandit}). We choose a set $\lbrace \sigma_i \rbrace_{i=1}^{d^2}$ of independent Hermitian matrices and parametrize any $\rho_a \in \gamma$ and $O_l \in\mathcal{C}$ as
\begin{align}\label{eq:qcbvecparametrization}
    \rho_a = \sum_{i = 1}^{d^2}\theta_{a,i} \sigma_i , \quad O_l = \sum_{i=1}^{d^2} c_{l,i} \sigma_i ,
\end{align}
where $\theta_{a,i} = \Tr (\rho_a \sigma_i ) $ and $c_{l,i} = \Tr (O_c \sigma_i) $ and we define the vectors $\boldsymbol{\theta}_a = (\theta_{a,i})_{i=1}^{d^2} \in \mathbb{R}^{d^2}$ and  $\mathbf{c}_l  = (c_{a,i})_{i=1}^{d^2}  \in \mathbb{R}^{d^2}$. Then we note that for the QCB model the rewards will be given by~\eqref{eq:contextualproblem} with the restriction that since we only receive one observable at each round then the context vector is constant among all actions. Thus, in our model, the rewards have the following expression
\begin{align}\label{eq:singlecontextbandit}
    R_t = \boldsymbol{\theta}_a^\top \bold{c}_{t} + \eta_t.
\end{align}
We denote this classical model as \textit{linear disjoint single context bandits}. In order to make clear when the classical real vectors parametrize an action $\rho_a\in\gamma$ or context $O_l \in \mathcal{C}$~\eqref{eq:qcbvecparametrization} we will use the notation $\boldsymbol{\theta}_{\rho_a}$ and $\mathbf{c}_{O_l}$  respect to the standard Pauli basis.

\subsection{Linear Upper Confidence Bound algorithm}

Now we discuss the main strategy for the linear disjoint single context model~\eqref{eq:singlecontextbandit} that is the \textsf{LinUCB} (linear upper confidence bound) algorithm~\cite{aurer,chu,lin1,lin2,lin3}. We describe the procedure of \textsf{LinUCB} for selecting an action and we leave for the next section a complete description of the algorithm for the QCB setting.

At each time step $t$, given the previous rewards $R_1,...,R_{t-1} \in \mathbb{R}$, selected actions $a_1,...,a_{t-1} \in [k]$ and observed contexts $\mathbf{c}_1 ,..., \mathbf{c}_t \in \mathbb{R}^d$ the \textsf{LinUCB} algorithm builds the \textit{regularized least squares estimator} for each unknown parameter $\boldsymbol{\theta}_a$ that have the following expression
\begin{align}
\tilde{\boldsymbol{\theta}}_{t,a}  = V_{t,a}^{-1} \sum_{s=1}^{t-1} R_s \textbf{c}_{s} \mathbb{I} \lbrace a_t = a \rbrace,
\end{align}
where $V_{t,a} = I + \sum_{s=1}^{t-1} \textbf{c}_{s} \textbf{c}^\top_{s} \mathbb{I} \lbrace a_t = a \rbrace$.
Then \textsf{LinUCB} selects the following action according to
\begin{align}
a_{t+1} = \argmax_{a\in [k]} \tilde{\boldsymbol{\theta}}^\top_{t,a} \mathbf{c}_{t} + \alpha \sqrt{  \mathbf{c}^\top_{t} {V^{-1}_{t}} \mathbf{c}_t } ,
\end{align}
where $\alpha > 0$ is a constant that controls the width of the confidence region on the direction of $\textbf{c}_{t,a}$. The idea behind this selection is to use an overestimate of the unknown expected value using an upper confidence bound. This is the principle behind \textsf{UCB}~\cite{firstUCB} which is the main algorithm that gives rise to this class of optimistic strategies.
The value of the constant $\alpha$ is chosen depending on the structure of the action set. In the next section, we will discuss the appropriate choice of $\alpha$ for our setting.
We note that this strategy can also be applied in an adversarial approach where the context is chosen by an adversary instead of sampled from some probability distribution. 

The above procedure is shown to be sufficient for practical applications~\cite{contextual_new_article_recommendation} but the algorithms achieving the optimal regret bound are \textsf{SupLinRel}~\cite{aurer} and \textsf{BaseLinUCB}~\cite{chu}. They use a phase elimination technique that consists of each round playing only with actions that are highly rewarding but still the main subroutine for selecting the actions is \textsf{LinUCB}. This technique is not the most practical for applications but it was introduced in order to derive rigorous regret upper bounds. For these strategies if we apply it to a $d$-dimensional QCB bandit $(\gamma,\mathcal{C})$ they achieve the almost optimal regret bound of
\begin{align}\label{eq:sublinear}
\text{Regret}_T^{\gamma, \mathcal{C},\pi } = O \left( d\sqrt{kT\ln^3(T^2\log (T))} \right).
\end{align}
The above bound comes from~\cite{chu} and it is adapted to our setting \footnote{Our model uses a different unknown parameter $\boldsymbol{\theta}_a \in \mathbb{R}^d$ for each action $a\in [k]$. This model can be easily adapted to settings where they assume only one unknown parameter shared by all actions if we enlarge the vector space and define $\boldsymbol{\theta} = (\theta_1,..., \theta_k ) \in\mathbb{R}^{dk}$ as the unknown parameter.} using the vector parametrization~\eqref{eq:qcbvecparametrization}. Their regret analysis works under the normalization assumptions $\| \boldsymbol{\theta} \|_2 \leq 1$, $\| \mathbf{c}_t \|_2 \leq 1$ and the choice of $\alpha = \sqrt{\frac{1}{2} \ln (2T^2 k ) } $.  We note that it matches our lower bound~\eqref{eq:qcblowerbound} except for the logarithmic terms.

\section{Low energy quantum state recommender system}\label{sec:lowenergy}

In this section, we describe how the QCB framework can be adapted for a recommender system for low-energy quantum states.
We consider a setting where the learner is given optimization problems in an online fashion and is able to encode these problems into Hamiltonians and also has access to a set of unknown preparations of (mixed) quantum states that they want to use in order to solve these optimization problems. The task is broken into several rounds; at every round, they receive an optimization problem and are required to choose the state that they will use for that problem. As a recommendation rule, we use the state with the lowest energy with respect to the Hamiltonian where the optimization problem is encoded. We denote this problem as the \textit{low energy quantum state recommendation problem}. We note that our model focuses on the recommendation following the mentioned rule. After selecting the state the learner will use it for the optimization problem (for example the initial ansatz state of a variational quantum eigensolver), but that is a separate task.
When a learner chooses an action, they must perform an energy measurement using the given Hamiltonian on the state corresponding to the chosen action. Then the measurement outcomes are used to model rewards, and their objective is to maximize the expected cumulative reward, i.e, the expectation on the sum of the measurement outcomes over all the rounds played. These measurements can be done fairly simply. Any Hamiltonian can be written as a linear combination of Pauli observables. Now by measuring each of these Pauli observables (since these measurements are conceivable, \cite{peruzzo2014variational}), and taking the appropriately weighted sum of the measurement outcomes, we can simulate such a measurement. The QCB framework naturally lends itself to this model, where the Hamiltonians are the contexts, and the set of states that can be prepared reliably serve as the actions.

In this paper we study some important families of Hamiltonians --- specifically, the Ising and a generalized cluster model from~\cite{verresen2017one}, which are linear combinations of Pauli observables with nearest-neighbor interactions and for $n$ qubits can be written as 
\begin{align}\label{eq:ising}
&H_\text{ising}(h)=\sum_{i=1}^{n}(Z_iZ_{i+1}+hX_i),\\
 \label{eq:cluster}   &H_\text{cluster}(j_1,j_2)=\sum_{i=1}^{n} (Z_i-j_iX_iX_{i+1}-j_2X_{i-1}Z_iX_{i+1}),
\end{align}
where $h,j_1,j_2 \in \mathbb{R}$. In the Ising model, $h$ corresponds to the external magnetic field. Specifically, we consider QCB with the following context sets

\begin{align}\label{eq:contexts}
    \mathcal{C}_{\text{Ising}} = \left\lbrace H_\text{ising}(h) : h\in \mathbb{R} \right\rbrace, \quad  \mathcal{C}_{\text{cluster}} = \left\lbrace  H_\text{cluster}(j_1,j_2) : j_1,j_2\in \mathbb{R} \right\rbrace .
\end{align}
   
Important families of Hamiltonians like the models discussed above show translation-invariance and are spanned by Pauli observables showing nearest-neighbor interactions, and as a result, span a low dimensional subspace. We illustrate the scheme described above through the example of the Ising Model contexts. The Pauli observables that need to be measured are $\lbrace X_i\rbrace_{ i \in [n]}$ and $\lbrace Z_iZ_{i+1} \rbrace_{i \in [n]}$ . These observables have 2 possible measurement outcomes, -1 and 1, and by the reward distribution of a Pauli observable $M$ given by Born's rule~\eqref{eq:reward_distribution} on a quantum state $\rho$, the reward can be modeled as
\begin{align}
    R_{M,\rho}= 2\text{Bern}\left(\frac{\Tr(M\rho)+1}{2}\right)-1 , 
\end{align}
where $\text{Bern}(x) \in \lbrace 0 , 1 \rbrace$ is a random variable with Bernoulli distribution with mean $x\in [0,1]$. By performing such a measurement for all the Pauli observables and adding the rewards, the reward for $\mathcal{C}_\text{Ising}$ is
\begin{align}
    R_{\text{Ising}}=-h\sum_{M \in {X_i,i \in [n]}}R_{M,\rho} -\sum_{M' \in {Z_iZ_{i+1},i \in [n]}}  R_{M',\rho},
\end{align}
where we took the negative of the sum of the measurements because we are interested in a recommender system for the lowest energy state. A similar formulation applies to the QCB with generalized cluster Hamiltonian contexts.\\
In the rest of this section, we illustrate a modified \textsf{LinUCB} algorithm for the QCB setting. Then we implement this recommender system where the contexts are Hamiltonians belonging to the Ising and a generalized cluster models~\eqref{eq:contexts}, and demonstrate our numerical analysis of the performance of the algorithm by studying the expected regret. We also demonstrate that depending on the action set, the algorithm is able to approximately identify the phases of the context Hamiltonians. 

\subsection{Gram-Schmidt method}\label{subsec:gram}

Similarly to the task of shadow tomography~\cite{aaronson2018shadow} and classical shadows~\cite{huang2020predicting}, we do not need to reconstruct the full quantum states since the algorithm has only to predict the trace between the contexts and the unknown quantum states. Thus, the \textsf{LinUCB} algorithm has only to store the relevant part of the estimators for this computation. As the measurement statistics depend only on the coefficient corresponding to the Pauli observables spanning the observables in the context set, only those Pauli observables in the expansion of the estimators are relevant. This means that our algorithm can operate in a space with a smaller dimension than the entire spaces spanned by $n$ qubits, which has a dimension that is exponential on the number of qubits.

In order to exploit this property to improve the space complexity of the \textsf{LinUCB} algorithm, we use the Gram-Schmidt procedure in the following way. At any round, a basis for the vector parameterizations (as shown in~\eqref{eq:qcbvecparametrization}) of all the previously received contexts is stored. If the incoming  vector parameterization of the context is not spanned by this basis, the component of the vector orthogonal to the space spanned by this set is found by a Gram-Schmidt orthonormalization-like process, and this component is added to the set, after normalization. Therefore, at any round, there will be a list of orthonormal vectors that span the subspace of all the vector parameterizations of the contexts  received so far, and the size of the list will be equal to the dimension of the subspace, which we call \textit{effective dimension}, i.e, \\
\begin{align}\label{eq:d_eff}
    d_{\text{eff},t}=\text{dim}(\{O_{c_t} \in \mathcal{C}\}: t \in [T]).
\end{align}
From now on we will omit the subscript for the time step $t$ and simply denote the effective dimension as $d_\text{eff}$. Instead of feeding the context vectors directly, for any incoming context vector, we construct $d_\text{eff}$-dimensional vectors, whose $i^{th}$ term is the inner product of the context vector and the $i^{th}$ basis vector. In case the incoming vector is not spanned by the basis, we first update the list by a Gram-Schmidt procedure (which will result in an addition of another orthonormal vector to the list, and an increase in $d_\text{eff}$ by 1), and then construct a $d_\text{eff}$-dimensional vector as described before.
 This vector is fed to the \textsf{LinUCB} algorithm. The Gram-Schmidt procedure is stated in Algorithm ~\ref{alg:grahm-schmidt} and the modified \textsf{LinUCB} algorithm is stated explicitly in Algorithm ~\ref{alg:LinUCBgram}.
The efficiency of this method is well illustrated in the case where all the contexts are local Hamiltonians. As an example, we discuss the case of generalised cluster Hamiltonians. Note that the space complexity of the standard QCB framework is $O(kd^2)$, where k is the number of actions, and d is the dimension of the vector parameterizations of the contexts. In the standard \textsf{LinUCB} technique, the context vectors $\mathbf{c_t},t \in [T]$ would be $4^n$-dimensional, where n is the number of qubits the Hamiltonian acts on, in which case the space complexity of the algorithm is $O(k4^{2n})$ . 
In our studies, the contexts are Ising Hamiltonians and a generalised cluster Hamiltonian~\eqref{eq:contexts} with $d_\text{eff}\leq 2$ and $d_\text{eff}\leq 3$ respectively. Since the vectors fed into the modified \textsf{LinUCB} is $d_\text{eff}$-dimensional, the space complexity is $O(kd_\text{eff}^2)$, i.e, $O(4k)$ and $O(9k)$ respectively. 

\begin{algorithm}[H]
	\caption{Gram-Schmidt Algorithm ($\text{Gram}(\mathbf{c},{V_{a}},\mathbf{b}_a,\text{CBasis})$)}
	\label{alg:grahm-schmidt}
	\begin{algorithmic}
	    \State Input $[\mathbf{c},\{V_{a}\}_{a \in \mathcal{A}}, \{\mathbf{b}_a\}_{a \in \mathcal{A}},\text{CBasis}]$
		    \For {\textbf{v} in CBasis}  
    		    \State $\textbf{c} \leftarrow \textbf{c}- (\mathbf{v}^{\top} \textbf{c})\textbf{v}$
    		    \State $\mathbf{v}_\text{ct}=\mathbf{v}_\text{ct} \oplus (\mathbf{v}^{\top} \textbf{c})$

    	       \EndFor
            \If {$\textbf{c}!=\textbf{0}$}
                \State $\mathbf{v}_\text{ct}=\mathbf{v}_\text{ct} \oplus \|\mathbf{c}\|_2$
                \State Add $\textbf{c}/\|\textbf{c}\|_2$ to CBasis
    		    \For {$a=1,2,\ldots, K$}
    		        \State Set $V_a= V_a \oplus I_{1}$, $\mathbf{b}_a= \mathbf{b}_a \oplus \mathbf{0}_{1}$
    		    \EndFor 
                \EndIf
            \State \textbf{Return} $\left[\mathbf{c'},\{V_{a}\}_{a \in \mathcal{A}}, \{\mathbf{b}_a\}_{a \in \mathcal{A}},\text{CBasis}\right]$	
	\end{algorithmic} 
 \end{algorithm}

\begin{algorithm}[H]
	\caption{\textsf{LinUCB} with Gram-Schmidt} 
	\label{alg:LinUCBgram}
	\begin{algorithmic}[1]
        \State Input $\alpha\in\mathbb{R}$ 
        \State Set CBasis = $\left[ \,\, \right]$
        \State Set $V_a= \mathbf{1} ,\mathbf{b}_a=\mathbf{0},  \forall a \in \mathcal{A}$
		\For {$t=1,2,\ldots$}
            \State $\left[\mathbf{c'}_{O_t},\{V_{a}\}_{a \in \mathcal{A}}, \{\mathbf{b}_a\}_{a \in \mathcal{A}},,\text{CBasis}\right]\leftarrow \text{Gram}(\mathbf{c}_{O_t},\{V_{a}\}_{a \in \mathcal{A}}, \{\mathbf{b}_a\}_{a \in \mathcal{A}},,\text{CBasis})$
            \For {$a \in \mathcal{A}$}
                \State $\tilde{\boldsymbol{\theta}}_{\rho_a} \leftarrow V^{-1}_{a}\mathbf{b}_a$
                \State $p_{t,a} \leftarrow \tilde{\boldsymbol{\theta}}_{\rho_a}\mathbf{c'}_{O_t}+\alpha\sqrt{\mathbf{c}^{'\top}_{O_t} V^{-1}_{a}\mathbf{c'}_{O_t}}$
                \EndFor
    		\State Choose action $\mathbf{a_t} = \argmax_{a \in \mathcal{A}}p_{t,a}$;
    		\State Measure state $\rho_{a_t}$ with $O_{c_t}$ and observe reward $R_{O_t}$
    		\State Set $V_{a_t} \leftarrow V_{a_t}+\mathbf{c'}_{O_t}\mathbf{c'}^{'\top}_{O_t}$
    		\State Set $\mathbf{b}_{a_t} \leftarrow \mathbf{b}_{a_t}+R_{O_t}\mathbf{c'}_{O_t}$
			
			\EndFor
	
	\end{algorithmic} 
\end{algorithm}
\subsection{Phase classifier}

In order to implement the numerical simulations we need to choose the environments for the QCB with context sets $\mathcal{C}_\text{ising}$ and $\mathcal{C}_\text{cluster}$. Elements of both context sets are parameterized by tunable parameters. We study the performance of the recommender system by choosing a context probability distribution that is uniform on these parameters.  Then we chose the actions 
 as ground states of Hamiltonians that corresponded to the limiting cases (in terms of the parameters) of these models. In order to study the performance of our strategy apart from the expected regret~\eqref{eq:regretqcb} we want to observe how the actions are chosen. For every action, we maintained a set, which contained all the Hamiltonians for which that action was chosen. We observed that almost all the elements in each of these sets belonged to the same phase of the Hamiltonian models. 
 
 In order to study the performance of the algorithm in this respect, we define the \textit{classifier regret} as
\begin{align}
   \text{ClassifierRegret}_{T}^{\mathcal{\gamma,C},\pi }=\sum_{t=0}^{T-1}\mathbb{I}\left[a_t \neq a_\text{optimal,t} \right],
\end{align}
where $a_\text{optimal,t}=\argmax_{a \in [k]}\tr{O_{t}\rho_a}$, and $O_t \in \mathcal{C}$ is the context observable received in $t^\text{th}$ round.
Note that the above classifier regret is not guaranteed to be sublinear, like expected regret is ~\eqref{eq:sublinear} for the \textsf{LinUCB} strategy. This can be understood intuitively: consider a scenario where the bandit picks an actions with a small sub-optimality gap~\eqref{eq:suboptimalitygap}; then the linear regret will increase by a very small amount, the classifier regret will increase by one unit, as all misclassifications have equal contribution to regret. These, however, are theoretic worst-case scenarios, and this classifier regret is useful to study the performance of the algorithm in practice in our settings.

\subsection{Numerical simulations}\label{sec:simulations}

 Before we move into the specific cases, we note the importance of the choice of $\alpha$ in Algorithm~\ref{alg:LinUCBgram}. While the theoretical analysis of the \textsf{LinUCB} algorithm depends on the choice of $\alpha$ , in practice one can tune this value to observe a better performance. We primarily use  the $\alpha$ described in~\cite{lattimore2020bandit} (Chapter 19) given by 
\begin{align}
    \alpha_t= m + \sqrt{2\log\left( 
\frac{1}{\delta} \right)+d\log\left(1+\frac{tL^2}{d} \right)}.
\end{align}
Here, $L$ and $m$ are upper bounds on the 2-norm of the action vectors and unknown parameter respectively,  $d$ is the dimension and $\delta$ is once more a probability of failure.

Finally, while we study the performance of our algorithm in our simulations with estimates of expected regret and expected classifier regret, it is important to note that in an experimental setup, the learner will only be able to measure the cumulative reward at every round. However, since these are simulations, we are able to study the regret as well, as they are standard metrics to gauge the performance of the algorithms. In the next subsection we discuss our simulations of the QCB bandit $(\gamma,\mathcal{C}_\text{cluster})$ model and later, the QCB bandit $(\gamma,\mathcal{C}_\text{Ising})$ is discussed in the Appendix \ref{sec:Appendix}.

\subsubsection{Generalised Cluster Model}
We study the performance of the recommender system for the QCB bandit $(\gamma,\mathcal{C}_\text{cluster})$, where the generalised cluster Hamiltonians  ~\cite{verresen2017one}, act on 10 qubits and 100 qubits respectively. We observe that the performance of the algorithm is not affected by the number of qubits, as the effective dimension of the context set remains unchanged,i.e, $d_\text{eff}=3$. We study the expected regret and classifier regret for these two cases, and illustrate the system's performance in finding the phases of the generalised cluster Hamiltonians. This model was also studied in \cite{caro2021generalization}, where they designed a quantum convolutional neural network to classify quantum states across phase transitions. We chose 5 actions corresponding to approximate ground states of Hamiltonians that are the limiting cases of the generalised cluster model; i.e, generalised cluster Hamiltonians with parameters $j_1,j_2$ in ~\eqref{eq:cluster}, $j_1,j_2\rightarrow \lbrace 0,0 \rbrace,\lbrace 0,\infty \rbrace,\lbrace \infty,0 \rbrace,\lbrace 0,-\infty \rbrace$and $\lbrace 0,-\infty \rbrace$. Note that these methods of approximating ground states is only for simulation purposes.
\begin{figure}
    \centering
    \includegraphics[scale=.4]{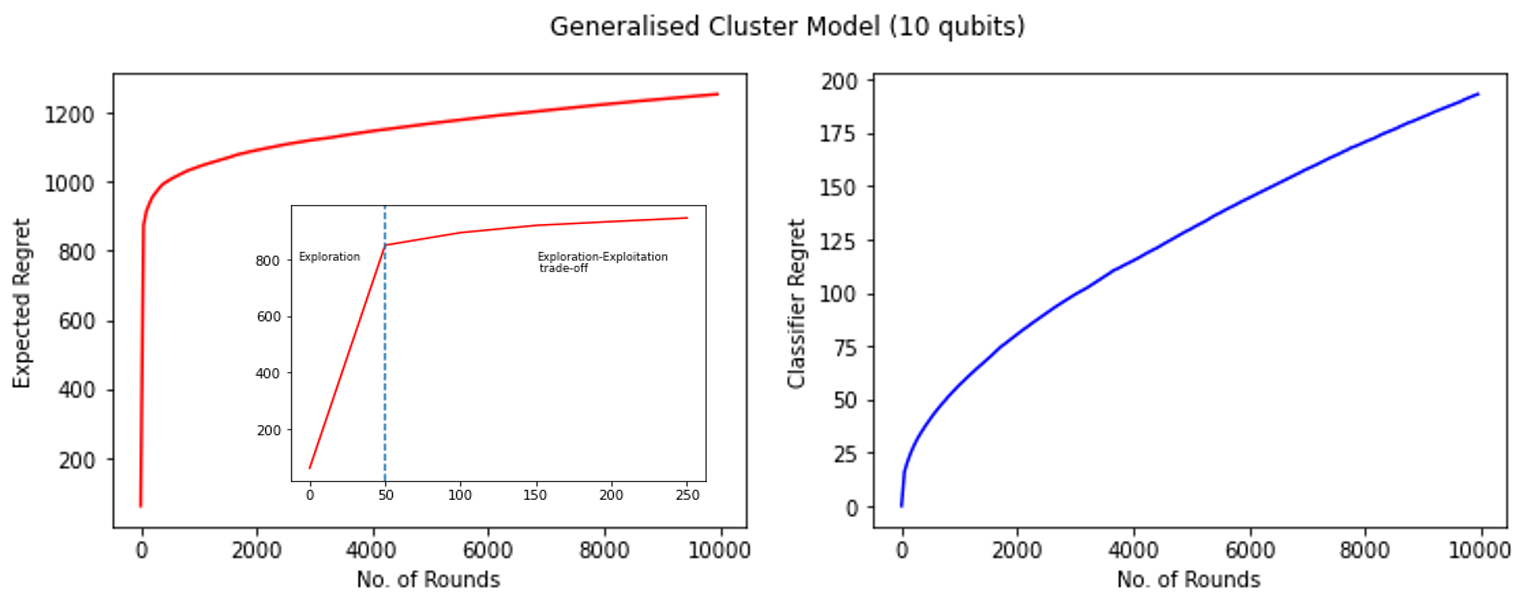}
    \includegraphics[scale=.4]{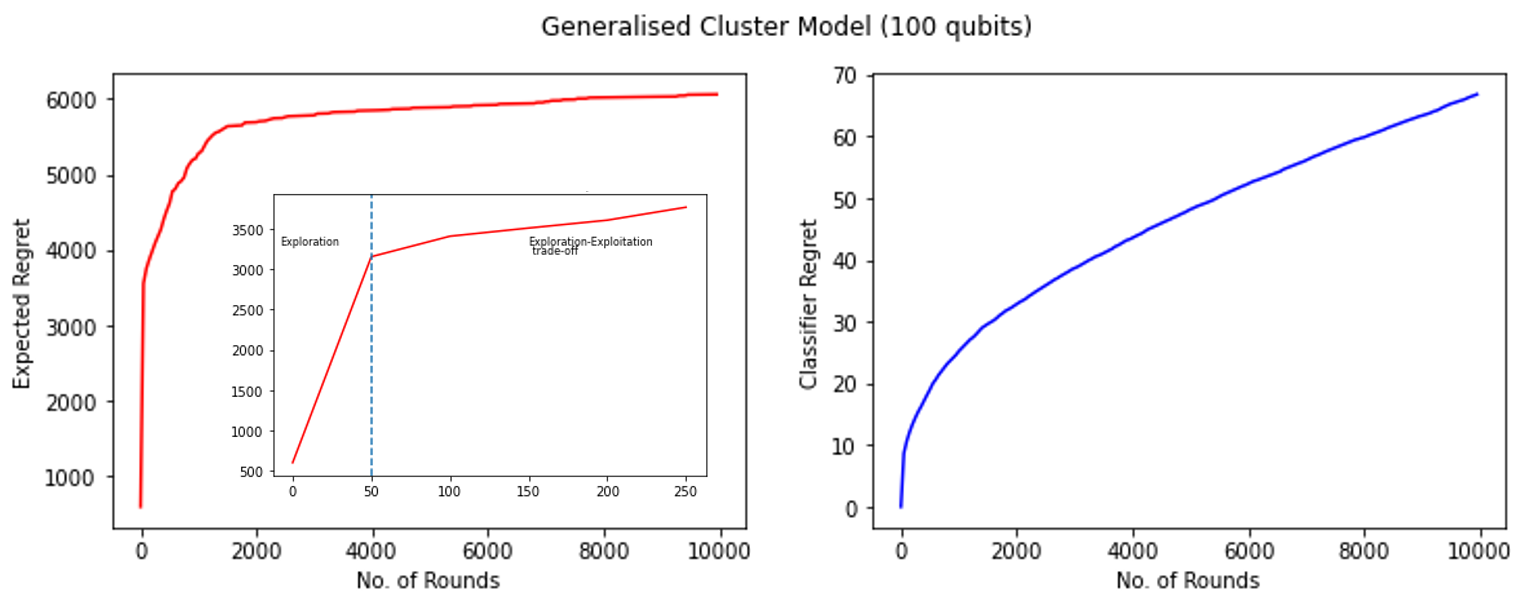}
    \caption{Plots for Regret and Classifier regret for QCB bandit $(\gamma,\mathcal{C})$, where the Hamiltonians in $\mathcal{C}$ are a specific form of generalised cluster models acting on 10 and 100 qubits respectively. The performance is not very different since $d_\text{eff}=3$ ~\eqref{eq:d_eff} for both cases. The action set is chosen to be approx. ground states of some generalised cluster Hamiltonians}
    \label{fig:clusterregret}
\end{figure}

Initially a steep growth in regret is observed, followed by sudden slower pace. On looking closely, in the plot below, we find that the regret indeed continues to grow, albeit at a slower pace. This was be explained by observing that the sub-optimality gap of the second-best action is quite small in comparison to the sub-optimality gaps of the rest of the actions. 
Initially the \textsf{LinUCB} algorithm does not have enough information about the unknown parameters and has to play all actions resulting in an exploration phase. However, at some point the bandit recognizes the "bad" actions, and plays either the best action or the action with a small sub-optimality gap most of the time - this is when the bandit has begun to balance exploration and exploitation. This is illustrated by observing the growth of the regret before and after the first 50 rounds in the insets of Fig. \ref{fig:clusterregret}.

In the beginning of this subsection, we had mentioned that the recommendation system picks the same action for context Hamiltonians belonging to the same phase.
 We illustrate this in Figure \ref{fig:phaseCluster}. In the scatter plot, when a context generalised cluster Hamiltonian is received, a dot is plotted with the x-axis  and y- axis coordinates corresponding to its parameters $j_1,j_2$ respectively. Depending on the action picked by the algorithm, we associate a color to the dot. The resultant plot is similar (but not exact) to the phase diagram of the generalised cluster Hamiltonian.
\begin{figure}[H]
    \centering
    \includegraphics[scale=.5]{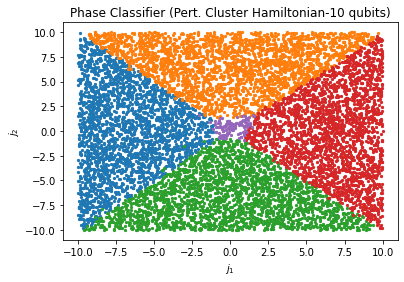}
    \includegraphics[scale=.5]{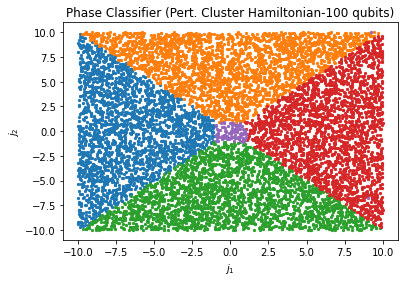}
    \caption{These plots illustrate how the recommender system identifies the phases of the generalised cluster Hamiltonian. The x and y-axis represent the coupling coefficients of the generalised cluster Hamiltonian received as context. Like the Ising Model simulations, we associate a color to each action. For any context $H_\text{cluster}(j_1,j_2)$  corresponding to any of the T rounds, one of these actions is picked by the algorithm. We plot the corresponding colored dot (blue for ground state of $H_\text{cluster}(-\infty,0)$, orange for $H_\text{cluster}(0,\infty)$, red for $H_\text{cluster}(\infty,0)$, green for $H_\text{cluster}(0,-\infty)$ and purple for $H_\text{cluster}(0,0)$) at the appropriate coordinates, for rounds that follow after the bandit has "learned" the actions, i.e, the growth in regret has slowed down.} 
    \label{fig:phaseCluster}
\end{figure}

\section{Outlook}
This work describes the first steps for recommending quantum data  by implementing the bandit framework in a rigorous fashion for practical scenarios. We provide a recommender system based on the theory of linear contextual bandits and show that the upper and lower-bounds on the expected regret are tight except for logarithmic factors. We also demonstrate its efficiency in practice through simulations. Later, we show how such a system could also be used to recognise phases of Hamiltonians.

We restricted our attention to a model where the expected rewards follow a linear function in terms of the context and the unknown states. While the low energy quantum state recommendation problem uses the outcome of the measurement as a reward, one could think of other recommendation tasks with more complicated reward functions. Non-linear rewards have been studied in the bandit literature and receive the name of structured bandits~\cite{lattimore2014bounded,combes2017minimal,russo2013eluder}. This model could be a natural extension of the QCB for other recommendation tasks where the rewards are not in one-to-one correspondence with measurement outcomes. Going back to the general model, the environment is modeled by a set of unknown quantum processes which in the QCB model we assumed to be a set of stationary unknown quantum states. In a more general scenario, we can consider environments that change with time due to some Hamiltonian evolution or noise interaction with an external environment. In the bandit literature, non-stationary environments were first considered in~\cite{gittins1979bandit,gittins2011multi} where each action was associated with a Markov chain or the restless bandit model~\cite{whittle1988restless} where the Markov chain associated to each action evolves with time. More recently in~\cite{luo2018efficient} they studied a contextual bandit model with non-stationary environments. We expect that recommender systems for quantum data can also be extended to similar settings.

\textbf{Acknowledgements:} This research is supported by the National Research Foundation, Singapore and A*STAR under its CQT
Bridging Grant and the Quantum Engineering Programme grant NRF2021-QEP2-02-P05.

\bibliographystyle{unsrt}
\bibliography{biblio}
\newpage
\section*{Appendix} \label{sec:Appendix}
In this appendix, we discuss the simulations for the QCB bandit $(\gamma,\mathcal{C}_\text{ising})$ setting, in a similar fashion as described in subsection \ref{sec:simulations}.
We study the performance of the recommender system for the QCB bandit $(\gamma,\mathcal{C}_\text{ising})$, where the Ising Hamiltonians act on 10 qubits and 100 qubits respectively. We observe that the performance of the algorithm is not affected by the number of qubits, as the effective dimension of the context set remains unchanged. We study the expected regret and classifier regret for these two cases, and illustrate the system's performance in finding the phases of the Ising Model.  The action set corresponds to the ground state of the 3 limiting cases of the Ising Model i.e, Ising Hamiltonians for parameter $h$ in~\eqref{eq:ising}, $h=0,h\rightarrow -\infty$ and $h \rightarrow \infty$ .
\begin{figure}[H]
    \centering
    \includegraphics[scale=.5]{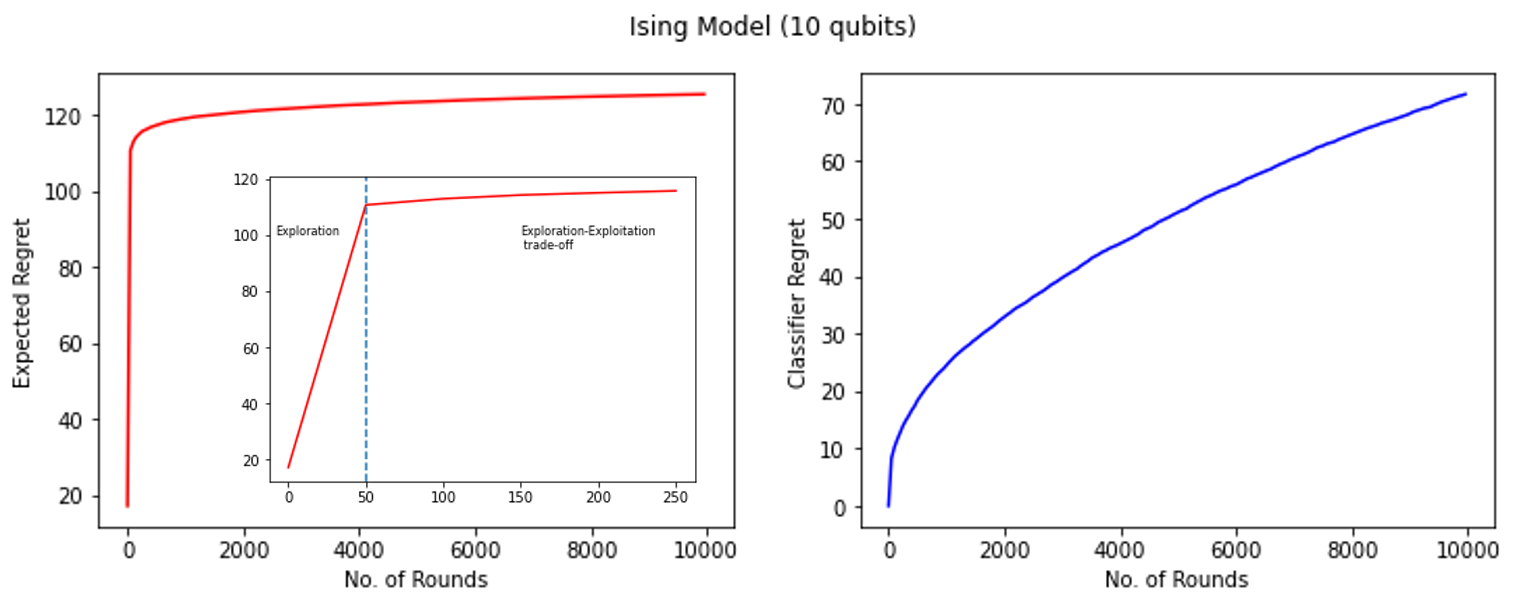}
    \includegraphics[scale=.5]{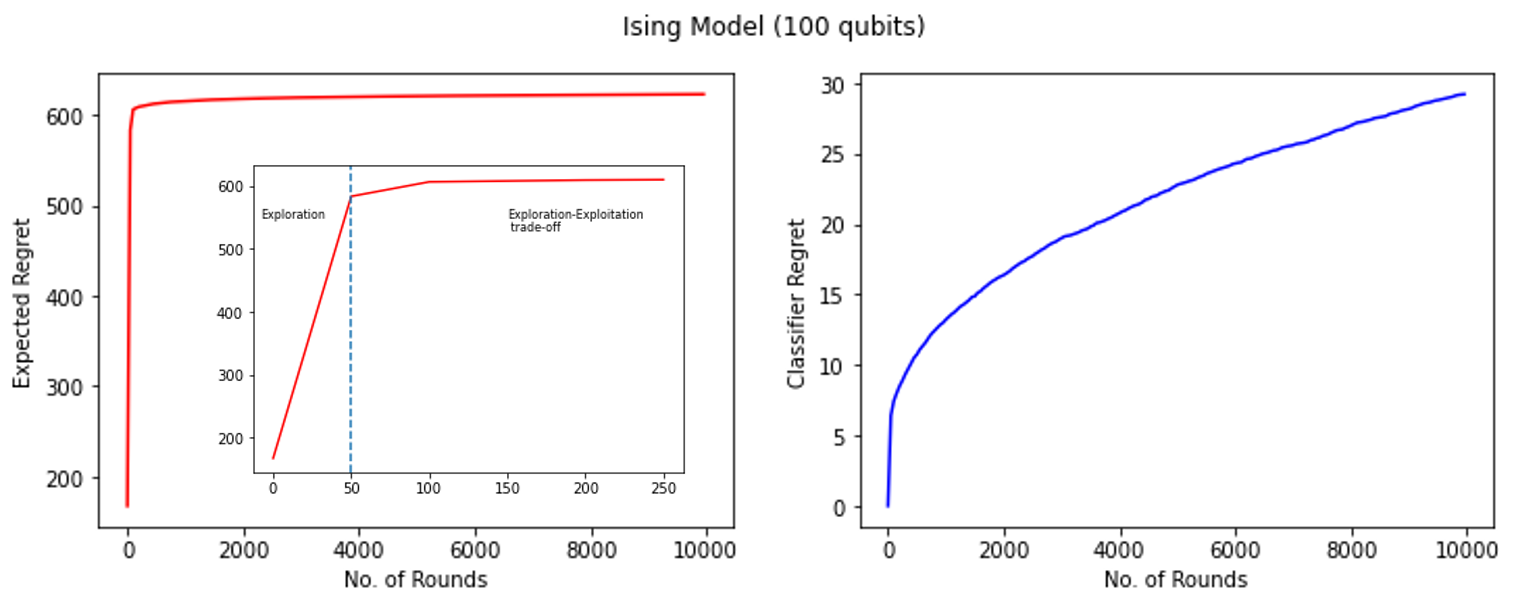}
    \caption{Plots for Regret and Classifier regret for QCB bandit $(\gamma,\mathcal{C})$, where the Hamiltonians in $\mathcal{C}$ are Ising Hamiltonians acting on 10 and 100 qubits respectively. The performance is not very different since $d_\text{eff}=2$ ~\eqref{eq:d_eff} for both cases. The action set is chosen to be approx. ground states of some Ising Hamiltonians}
    \label{fig:iusingregret}
\end{figure}

Once more, like the generalised cluster Model simulations, we observe a distinct "exploration" stage, followed by a "exploration-exploitation trade-off" stage which we again plot separately. This is illustrated by observing the growth of the regret before and after the first 50 rounds shown in the insets of Fig. \ref{fig:iusingregret}.

In the beginning of this subsection, we had mentioned that the recommendation system picks the same action for context Hamiltonians belonging to the same phase.
 We illustrate this in Figure \ref{fig:phaseIsing}. In the scatter plot, when a context Ising Hamiltonian is received, a dot is plotted with the x-axis coordinate corresponding to its parameter. Depending on the action picked by the algorithm, we associate a color to the dot. The resultant plot is very similar to that of the phase diagram of an Ising model. The Ising model is known to have phase transitions at $h=-1,1$, resulting in 3 phases, $(-\infty,-1],[-1,1]$ and $[1,\infty)$, and we observe that different actions were played in each of these ranges.
\begin{figure}[H]
    \centering
    \includegraphics[scale=.5]{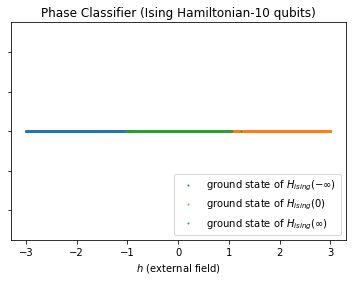}
    \includegraphics[scale=.5]{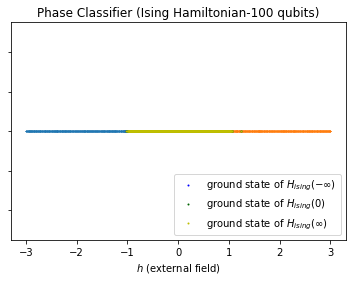}
    \caption{These plots illustrate how the recommender system identifies the phases of the Ising Hamiltonian. The x-axis represents the external field coefficient of the Ising Hamiltonian received as context. The blue, green, or yellow mark indicates that the algorithm plays the $1^\text{st}$,$2^\text{nd}$ or $3^\text{rd}$ action. We plot the corresponding colored dot at the appropriate coordinates, for rounds that follow after the bandit has "learned" the actions, i.e, the growth in regret has slowed down.}
    \label{fig:phaseIsing}
\end{figure}
\end{document}